\mathchardef\ordinarycolon\mathcode`\:
\definecolor{darkblue}{rgb}{0,0,0.6}
\declaretheoremstyle[
spaceabove=6pt, spacebelow=6pt,
headfont=\normalfont\bfseries,
notefont=\mdseries, notebraces={(}{)},
bodyfont=\normalfont,
postheadspace=1em,
qed={}
]{theorem}
\declaretheoremstyle[
spaceabove=6pt, spacebelow=6pt,
headfont=\normalfont\bfseries,
notefont=\mdseries, notebraces={(}{)},
bodyfont=\normalfont,
postheadspace=1em,
qed={\qedsymbol}
]{example}
\declaretheorem[style=theorem]{theorem}
\declaretheorem[style=theorem]{lemma}
\declaretheorem[style=theorem,numbered=no]{notation}
\declaretheorem[style=theorem]{definition}
\declaretheorem[style=example,numbered=no]{Example}
\declaretheorem[style=example,numbered=no]{Remark}
\newcommand{\Z}{\mathbb{Z}_N^*}
\newcommand{\g}{{\!g}}
\newcommand{\N}{{\!N}}
\newcommand{\A}{{\!A}}
\newcommand{\B}{{\!B}}
\newcommand{\lA}{\lambda_\A}
\newcommand{\lB}{\lambda_\B}
\newcommand{\LA}{L_\A}
\DeclareMathOperator{\rhoOP}{rho}
\DeclareMathOperator{\ord}{ord}
\DeclareMathOperator{\lcm}{lcm}
\newcommand{\qr}[1]{|#1\rangle}
\newcommand{\qofa}{quantum order-finding algorithm}
\newcommand{\qpfa}{quantum period-finding algorithm}
\newcommand{\sa}{Shor's algorithm}
\author{Daniel Chicayban Bastos\\
Universidade Federal Fluminense\\
Luis Antonio Kowada\footnote{Authors in alphabetical order. See {\tt https://goo.gl/rzBAq9}.}\\
Universidade Federal Fluminense}
\date{November 2nd 2020}
\title{A quantum version of Pollard's Rho\\
of which Shor's Algorithm is a particular case}
\begin{document}
\fontfamily{cmr}\selectfont
\maketitle
\begin{abstract}
Pollard's Rho is a method for solving the integer factorization
problem.  The strategy searches for a suitable pair of elements
belonging to a sequence of natural numbers that yields a nontrivial
factor given suitable conditions.  In translating the algorithm to a
quantum model of computation, we found its running time reduces to
polynomial-time using a certain set of functions for generating the
sequence.  We also arrived at a new result that characterizes the
availability of nontrivial factors in the sequence.  The result has
led us to the realization that Pollard's Rho is a generalization of
Shor's algorithm, a fact easily seen in the light of the new result.
\end{abstract}


\section{Introduction}\label{sec:intro}

The inception of public-key cryptography based on the factoring
problem~\cite{rabin1979,rsa} ``sparked tremendous interest in the
problem of factoring large integers''~\cite[section~1.1, page~4]{joy}.
Even though post-quantum cryptography could eventually retire the
problem from its most popular application, its importance will remain
for as long as it is not satisfactorily answered.

While public-key cryptosystems have been devised in the last fifty
years, the problem of factoring ``is centuries
old''~\cite{pomerance1982analysis}.  In the nineteenth century, it was
vigorously put that a fast solution to the problem was
required~\cite[section~VI, article~329, page~396]{gauss}.
\begin{quote}\small
All methods that have been proposed thus far are either restricted to
very special cases or are so laborious and prolix that even for
numbers that do not exceed the limits of tables construed by estimable
men [...] they try the patience of even the practiced calculator.  And
these methods can hardly be used for larger numbers.
\end{quote}
Since then, various methods of factoring have been devised, but none
polynomially bounded in a classical model of computation.  Then in
1994 a polynomial-time procedure was given~\cite{shor94,shor97,shor99}
with the {\em catch}~\cite[page~65]{aaronson2008limits} that it needed
a quantum model of computation.  The algorithm was considered ``a
powerful indication that quantum computers are more powerful than
Turing machines, even probabilistic Turing
machines''~\cite[section~1.1.1, page~7]{qc}: the problem is believed
to be hard~\cite[section~1]{lenstra1994}.


Nevertheless, with each new observation, a new light is shed on the
problem and its implications.  In the next sections, a new perspective
over Shor's algorithm is presented.  It is an observation that has
helped us to better understand it.

\subsection*{Contents of this paper}

\begin{itemize}
\item Theorem~\ref{thm:main} in Section~\ref{sec:main} is a new
  result.  The theorem characterizes nontrivial collisions in the
  cycles of sequences produced by the polynomials used in Pollard's
  Rho.  It allows us to write a certain quantum version of the
  strategy, which is presented in Section~\ref{sec:quantum-rho}.  For
  clarity, we provide in Section~\ref{sec:circuit} a description of a
  quantum circuit for the algorithm.

\item The quantum version of Pollard's Rho presented happens to be a
  generalization of \sa\ and this fact is described in
  Section~\ref{sec:restrict}.  This is a new way of looking at \sa,
  but Section~\ref{sec:restrict} does not bring any new result.

\item Readers familiar with Pollard's Rho may skip
  Sections~\ref{sec:pollard-strategy}--\ref{sec:floyd}.  Similarly,
  Sections~\ref{sec:shor}--\ref{sec:extended} only describe Shor's
  algorithm, its original version and its extension to odd orders.

\item Section~\ref{sec:state-of-the-art} is a quick description of how
  to factor an integer with an emphasis on number-theoretical results
  of relevance to \sa.  It serves as a brief summary of scattered
  results in the literature.
\end{itemize}

\section{The state-of-the-art in factoring}\label{sec:state-of-the-art}

Assuming we know nothing about the integer, a reasonable general
recipe to factor an integer $N$ on the classical model of computation
is to try to apply special-purpose algorithms first.  They will
generally be more efficient if $N$ happens to have certain properties
of which we can take advantage.  Special-purpose algorithms include
Pollard's Rho, Pollard's $p - 1$, the elliptic curve algorithm and the
special number field sieve.

As an example of a general strategy, we can consider the following
sequence of methods.  Apply trial division first, testing for small
prime divisors up to a bound $b_1$.  If no factors are found, then
apply Pollard's Rho hoping to find a prime factor smaller than some
bound $b_2 > b_1$.  If not found, try the elliptic curve method hoping
to find a prime factor smaller than some bound $b_3 > b_2$.  If still
unsuccessful, then apply a general-purpose algorithm such as the
quadratic sieve or the general number field sieve~\cite[chapter~3,
  section~3.2, page~90]{hac}.

Since 1994, due to the publication of Shor's algorithm, a quantum
model of computation has been an important part of the art of
factoring because Shor's algorithm gives us hope of factoring integers
in polynomial-time~\cite{shor94,shor97,shor99}: to make it a reality,
we need to build large quantum computers.

Shor's algorithm is a probabilistic algorithm --- it succeeds with a
probability~\cite[section~5, page~317]{shor99} of at least $1 -
1/2^{k-1}$, where $k$ is the number of distinct odd prime factors of
$N$.  The integer factoring problem reduces~\cite{miller1976}, via a
polynomial-time transformation, to the problem of finding the order of
an element $x$ in the multiplicative group $\Z$.  The order $r =
\ord(x, N)$ of an element $x \in \Z$ is the smallest positive integer
$r$ such that $x^r = 1 \bmod{N}$.

If $r = \ord(x, N)$, for some integer $x \in \Z$, \sa\ finds a
factor by computing the greatest common divisor of $x^{r/2} -1$ and
$N$, that is, it computes $\gcd(x^{r/2} - 1, N)$, implying $r$ must be
even.  The essence of the strategy comes from the fact that
\begin{equation}\label{eq:shor}
 (x^{r/2} -1) (x^{r/2} + 1) = x^{r} - 1 = 0 \bmod{N}.
\end{equation}

It is easy to see from Equation~(\ref{eq:shor}) that if $x^{r/2}
\equiv -1 \bmod{N}$ then the equation is trivially true, leading the
computation of the $\gcd$ to reveal the undesirable trivial factor
$N$.  So Shor's algorithm needs not only an even order, but also
$x^{r/2} \not\equiv -1 \bmod{N}$.

If $\ord(x, N)$ happens to be odd, the algorithm must try a different
$x$ in the hope that its order is even.  To that end, an improvement
has been proposed~\cite{leander2002} to the effect that choosing $x$
such that $J(x, N) = -1$, where $J(x, N)$ is the Jacobi symbol of $x$
over $N$, lifts the lower bound of the probability of success of the
algorithm from $1/2$ to $3/4$.  This improvement tries to steer clear
from odd orders, but other contributions~\cite{lawson2015} show odd
orders can be used to one's advantage.  For example, if $r$ is odd but
$x$ is a square modulo $N$, then finding $y$ such that $y = x^2
\bmod{N}$ can lead to extracting a factor from $N$ by computing
$\gcd(y^r -1, N)$ and $\gcd(y^r + 1, N)$ if $N$ is of the form $N =
p_1 p_2$, where $p_1, p_2 \equiv 3 \bmod{4}$.  Compared to choosing
$J(x,N)=-1$, the improvement is a factor of $1 - 1/(4\sqrt{N})$, which
is too small: it is less than 1\% when $N$ has seven bits or more.

More emphatically, an extension of \sa\ has been
proposed~\cite{johnston2017} that uses any order of $x$ modulo $N$
satisfying $\gcd(x, N) = 1$ as long as a prime divisor of the order
can be found.  The work also includes sufficient
conditions~\cite[section~4, pages 3--4]{johnston2017} for when a
successful splitting of $N$ should occur.  Let us state the result.
Let $N = AB$, where $A, B$ are coprime nontrivial factors of $N$.  Let
$r = \ord(x, N)$ and $d$ be a prime divisor of $r$.  Suppose
  \[x^{r/d} \equiv 1 \bmod{A} \text{\qquad and\qquad}  x^{r/d} \not\equiv 1 \bmod{B}.\]
Then $1 < \gcd(x^{r/d} - 1, N) < N$.  A different perspective of this
result has been given~\cite[section~3]{extension} and the equivalence
of both perspectives has been established~\cite[section~3]{extension}.
Moreover, a generalization of Equation~(\ref{eq:shor}) that naturally
leads us to the extension~\cite{johnston2017} has been
provided~\cite[section~4]{extension}, making it immediately clear why
the extension of the algorithm works.  For example, it has been
observed~\cite[section~3, page~3]{johnston2017} that when $2$ is a
divisor of $r$, then $(x^{r/2} - 1)$ and $(x^{r/2} + 1)$ are the
nontrivial factors of $N=pq$, but if $3$ is also a divisor of $r$,
then even assuming that $(x^{r/3} - 1)$ is a nontrivial factor, the
other nontrivial factor is not $(x^{r/3} + 1)$.  To find the other
factor we need to use the general form~\cite[section~4]{extension} of
Equation~(\ref{eq:shor}), which is
 \[(x^{r/d} -1) \sum_{i=0}^{d - 1} x^{ir/d} = x^{r} - 1 \equiv 0 \bmod{N}.\]
In particular, if $3$ divides $r$, the other factor is $1 + x^{r/3} +
x^{2r/3}$, since $x^r - 1 = (x^{r/3} - 1)(1 + x^{r/3} + x^{2r/3})$.

The \qofa\ is also a probabilistic procedure.  There are times when
the answer given by the procedure is a divisor of the order, not the
order itself, that is, the procedure sometimes fails.  It has been
shown~\cite{xu2018} how to use these failed runs of the \qofa\ to
split $N$.  Suppose, for instance, that the \qofa\ produces a divisor
$c$ of $\ord(x, N)$.  Then $\gcd(x^c - 1, N)$ might yield a nontrivial
factor.  In this respect, we present in Section~\ref{sec:main} a new
theorem that shows when $x^c - 1$ shares a nontrivial factor with $N$.

Particular properties of $N$ have also been investigated providing
special-purpose variations of \sa.  For instance, if $N=pq$ is a
product of two distinct safe primes greater than 3, then $N$ can be
factored by a variation of \sa\ with probability approximately $1 -
4/N$ using a single successful execution of the \qofa, leading to the
fact that the product of two distinct safe primes is easy to
factor~\cite[section~1, page~2]{grosshans2015}.  In this direction, it
has been shown that if $N$ is a product of two safe primes, not
necessarily distinct, then any $x \bmod{N}$ such that $\gcd(x, N) = 1$
and $1 < x$ allows one to find a nontrivial factor of $N$ with a
single successful execution of the \qofa\ if and only if $J(x, N) =
-1$, where $J(x, N)$ is the Jacobi symbol of $x$ over $N$.

We also observe that, as a description of the state-of-the-start,
Shor's algorithm is not the complete story.  The subject is richer.
Interesting results have been published that are not based on Shor's
algorithm: GEECM, a quantum version of the elliptic curve method using
an Edwards curve, ``is often much faster than Shor's algorithm and all
pre-quantum factorization algorithms''~\cite[section 1, page
  2]{pqrsa}. Also, Shor's method ``is not competitive with [other
  methods that excel] at finding small primes''~\cite[section 2, page
  6]{pqrsa}.  The state-of-the-art in factoring is not only concerned
with large integers, although large integers are obviously of great
importance, given their wide use in cryptography.

\section{The strategy in Pollard's Rho}\label{sec:pollard-strategy}

Pollard's Rho is an algorithm suitable for finding small prime factors
in a composite number $N$ that is not a prime power.  Before applying
the strategy, it should be checked that the number to be factored is
not a prime power, a verification that can be done in
polynomial-time~\cite{aks2004}\cite[chapter~3, note~3.6,
  page~89]{hac}.  Throughout this paper, we assume these verifications
are performed before Pollard's Rho is applied.

Let us begin with an important well-known fact used in Pollard's Rho.

\begin{theorem} \label{thm:col-mod-p}  
Let $N = AB$, where $A, B$ are coprime nontrivial factors of $N$.  Let
$(N_k)$ be an infinite sequence of natural numbers reduced modulo $N$.
Let $(A_k)$ be the sequence of integers obtained by reducing modulo
$A$ each element $n_k \in (N_k)$.  If $a_i = a_j\in (A_k)$ then
  \[1 < \gcd(n_i - n_j, N) \le N,\]
for $n_i, n_j \in (N_k)$, where $\gcd$ represents the greatest common
divisor among its arguments.
\end{theorem}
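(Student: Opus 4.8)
The plan is to exploit the divisibility relations hidden in the hypothesis $a_i = a_j$. Since $a_i$ and $a_j$ are the reductions modulo $A$ of $n_i$ and $n_j$ respectively, the assumption $a_i = a_j$ is exactly the statement that $A \mid (n_i - n_j)$. In particular $A$ is a common divisor of $n_i - n_j$ and $N$ (the latter because $A \mid N$ by hypothesis). Since $A$ is a \emph{nontrivial} factor of $N$, we have $A > 1$, and therefore $\gcd(n_i - n_j, N) \ge A > 1$. The upper bound $\gcd(n_i - n_j, N) \le N$ is immediate because the gcd of any collection of integers (at least one of which is $N$) divides $N$, hence cannot exceed $N$ in absolute value; and it is positive since $N > 0$. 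This already gives the displayed chain of inequalities.

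The one point deserving a little care is that $\gcd(n_i - n_j, N)$ is well-defined and the argument does not collapse in a degenerate way — for instance if $n_i = n_j$, so that $n_i - n_j = 0$. In that case $\gcd(0, N) = N$, which still satisfies $1 < \gcd(n_i - n_j, N) \le N$, so the statement holds (with equality on the right). Thus no separate case analysis is strictly needed, but I would remark on the $n_i = n_j$ situation explicitly, since that is precisely the ``trivial collision'' case where the gcd returns $N$ rather than a proper factor — this is the reason the theorem only promises $\gcd \le N$ and not $\gcd < N$.

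I do not anticipate a genuine obstacle here: the whole content is the observation that a collision modulo $A$ forces $A \mid (n_i - n_j)$, after which the gcd bounds are forced by $A \mid N$ and $1 < A$. The only thing to be careful about is to state clearly which facts are being used — namely that $A, B$ are coprime nontrivial factors with $N = AB$ (so $1 < A$ and $A \mid N$) — and to not over-claim by asserting the gcd is a \emph{proper} factor, which is false exactly when $n_i = n_j$ as integers (not merely modulo $N$).
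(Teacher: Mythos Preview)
Your proposal is correct and follows essentially the same approach as the paper's own proof: from $a_i = a_j$ you deduce $A \mid (n_i - n_j)$, and since $A \mid N$ with $1 < A$, the gcd bound follows. Your treatment is in fact slightly more careful than the paper's, which does not explicitly address the upper bound or the degenerate case $n_i = n_j$.
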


\begin{proof}
If $a_i\equiv a_j \bmod{A}$ then $A$ divides $a_i - a_j$.  Therefore,
$A$ divides $\gcd(a_i - a_j, N)$ and $1 < A$.
\end{proof}


In other words, when $a_i = a_j \in (A_k)$, then $n_i - n_j$ shares a
common factor with $N$, where $n_i, n_j \in (N_k)$.  

Given a function $f\colon S \to S$, where $S$ is a finite nonempty
subset of the natural numbers, if the infinite sequence $(N_k)$ is
generated by the rule $n_{i+1} = f(n_i) \bmod{N}$, for all $i \ge 1$,
then $(N_k)$ contains a cycle.  Consequently, $(A_k)$ contains a
cycle, so there are indices $i \neq j$ such that $a_i = a_j \in
(A_k)$.  Moreover, whenever $a_i = a_j$, it follows that $1 < \gcd(n_i
- n_j, N) \le N$.  We say these pairs $(a_i, a_j)$ and $(n_i, n_j)$
are collisions\footnote{This terminology comes from the study of hash
  functions~\cite[chapter~5, page~137]{stinson2018}.  When two
  different elements $x, y$ in the domain of a hash function $h$
  satisfy $h(x) = h(y)$, we say $(x, y)$ is a collision.  We extend
  the terminology by adding the qualifiers ``trivial'' and
  ``nontrivial''.  Choose an element $n_i$ in the cycle of $(N_k)$.
  Checking the next elements $n_{i+1}, n_{i+2}, ...$, one by one, if
  we eventually find that $1 < \gcd(n_i - n_j, N) \le N$ for some $j >
  i$, then the pair $(i, j)$ is called a
  collision.}.

\begin{definition}
\label{def:collision}
Let $(N_k)$ be a sequence of integers reduced modulo $N$.  If
$\gcd(n_i - n_j, N) = N$, where $n_i, n_k \in (N_k)$ for indices $i$
and $j$, we say $(i, j)$ is a {\em trivial collision} relative to
$(N_k)$.  If $1 < \gcd(n_i - n_j, N) < N$, we say $(i, j)$ is a {\em
  nontrivial collision} relative to $(N_k)$.  When context makes it
clear, we refrain from explicitly saying which sequence the collision
refers to.
\end{definition}

As an immediate application of Definition~\ref{def:collision}, we may
define the ``Pollard's Rho Problem'' as the task of finding a
nontrivial collision in an infinite sequence $n_0, n_1, ...$ of
natural numbers reduced modulo~$N$.

As Theorem~\ref{thm:col-mod-p} asserts, a collision in $(A_k)$
provides us with enough information to find a collision in $(N_k)$,
out of which we might find a nontrivial factor.  The smaller the cycle
in $(A_k)$, the faster we would find a collision in $(A_k)$.  Using an
arbitrary function $f\colon S \to S$ to generate $(N_k)$ via a rule
$n_{i+1} = f(n_i) \bmod{N}$, where $S$ is a finite nonempty subset of
the natural numbers, we cannot guarantee that the cycle in $(A_k)$ is
smaller than the cycle in $(N_k)$, but if we take $f$ to be a
polynomial of integer coefficients, such as $f(x) = x^2 + 1$, then the
cycle in $(A_k)$ is smaller than the cycle in~$(N_k)$ with high
probability~\cite[section~3.2.2, note~3.8, page~91]{hac}.  This
importance of a polynomial of integer coefficients for Pollard's Rho
is established by the next theorem~\cite[section~6.6.2,
  pages~213--215]{stinson2018}.

\begin{theorem} \label{thm:poly-importance} 
Let $N$ be a composite number having $p$ as a prime divisor.  Let
$f(x)$ be a polynomial of integer coefficients.  Fix $n_0 < N$ as the
initial element of the infinite sequence $(N_k)$ generated by the rule
$n_{k+1} = f(n_k) \bmod{N}$ for all $k \geq 0$.  If $n_i = n_j
\bmod{p}$, then $n_{i+\delta} = n_{j + \delta} \mod p$ for all indices
$\delta \geq 0$.
\end{theorem}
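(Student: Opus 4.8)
The plan is to induct on the shift $\delta$, with the engine of the argument being the elementary fact that integer-coefficient polynomials respect congruences. Concretely, I would first record the lemma that if $a \equiv b \pmod{p}$ then $f(a) \equiv f(b) \pmod{p}$: writing $f(x) = \sum_{k} c_k x^k$ with $c_k \in \mathbb{Z}$, each difference $c_k(a^k - b^k)$ is divisible by $a - b$, hence by $p$, so $p \mid f(a) - f(b)$. This is the only structural input about $f$ that is needed.

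The one point that deserves a sentence of care is the interaction between the reduction modulo $N$ built into the recurrence and the congruence modulo $p$. Since $p$ is a prime divisor of $N$, reducing an integer modulo $N$ and then viewing it modulo $p$ agrees with reducing it directly modulo $p$; equivalently, $m \bmod N \equiv m \pmod{p}$ for every integer $m$. Consequently the defining rule $n_{k+1} = f(n_k) \bmod N$ gives $n_{k+1} \equiv f(n_k) \pmod{p}$ for all $k \ge 0$, so for the purpose of congruences modulo $p$ the sequence behaves exactly as if it were iterating $f$ without any reduction.

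With these two observations in hand the induction is immediate. The base case $\delta = 0$ is the hypothesis $n_i \equiv n_j \pmod{p}$. For the inductive step, assuming $n_{i+\delta} \equiv n_{j+\delta} \pmod{p}$, the lemma gives $f(n_{i+\delta}) \equiv f(n_{j+\delta}) \pmod{p}$, and combining this with the remark that $n_{i+\delta+1} \equiv f(n_{i+\delta}) \pmod{p}$ and $n_{j+\delta+1} \equiv f(n_{j+\delta}) \pmod{p}$ yields $n_{i+\delta+1} \equiv n_{j+\delta+1} \pmod{p}$, completing the induction.

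I do not anticipate a genuine obstacle here; the result is essentially a bookkeeping statement, and the only place one could slip is forgetting that the recurrence reduces modulo $N$ rather than modulo $p$ — which is precisely why the ``$p \mid N$ so reduction mod $N$ is harmless mod $p$'' remark is stated explicitly before the induction.
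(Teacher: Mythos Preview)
Your proposal is correct and follows essentially the same approach as the paper: both use that integer-coefficient polynomials preserve congruences mod $p$, observe that $p \mid N$ makes the reduction mod $N$ transparent mod $p$, and then iterate (you phrase it as induction on $\delta$, the paper says ``repeating these steps $\delta$ times''). Your version is slightly more explicit in justifying $f(a)\equiv f(b)\pmod p$ via $a-b \mid a^k-b^k$, whereas the paper simply asserts this, but the argument is the same.
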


\begin{proof} 
Suppose $n_i = n_j \bmod{p}$.  Since $f$ is a polynomial of integer
coefficients, then $f(n_i) = f(n_j) \bmod{p}$.  By definition,
$n_{i+1} = f(n_i) \bmod{N}$, so 
  \[n_{i+1} \bmod{p} = (f(n_i) \bmod{N}) \bmod{p} = f(n_i) \bmod{p},\]
because $p$ divides $N$. Similarly, $n_{j+1} \bmod{p} = f(n_j)
\bmod{p}$. Hence, $n_{i+1} = n_{j+1} \bmod{p}$.  By repeating these
steps $\delta$ times, we may deduce 
  \[n_i = n_j \bmod{p} \implies n_{i+\delta} = n_{j+\delta} \bmod{p}\]
for all $\delta \geq 0$, as desired.
\end{proof}

\begin{figure}[htb]
  \centering
  \begin{minipage}{.55\textwidth}
    \centering \includegraphics[width=\linewidth]{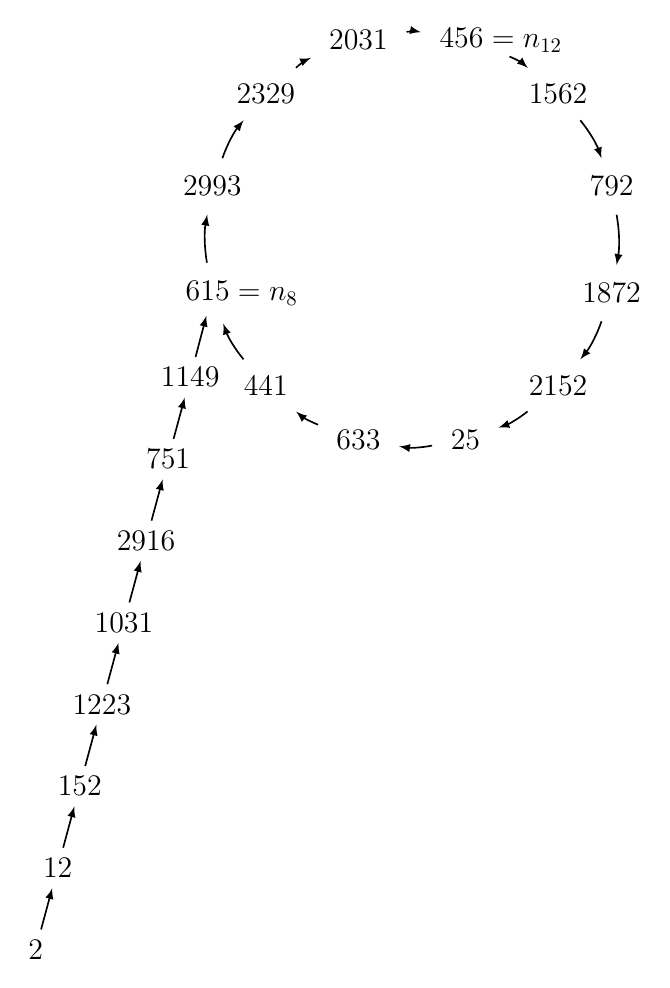}
  \end{minipage}%
  \begin{minipage}{.40\textwidth}
    \centering \includegraphics[width=\linewidth]{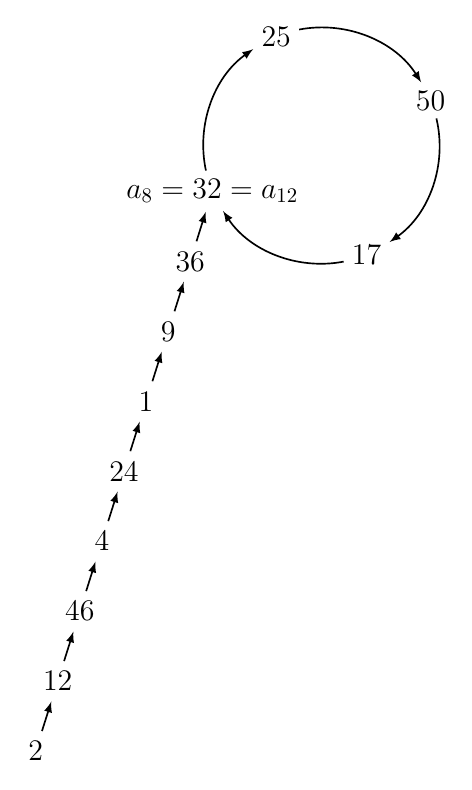}
  \end{minipage}
  \caption{The periodic sequences $n_{k+1} = n_k^2 + 8 \bmod{3127}$ and
    $a_{k+1} = a_k^2 + 8 \bmod{53}$ with $n_0 = a_0 = 2$ and a collision
    modulo $53$ at $(a_8, a_{12})$ related to the collision modulo $N$
    at $(n_8, n_{12})$.}
  \label{fig:3127-53}
\end{figure}

\begin{figure}[htb]
  \centering
  \begin{minipage}[b]{.30\linewidth}
    \centering \includegraphics[width=\textwidth]{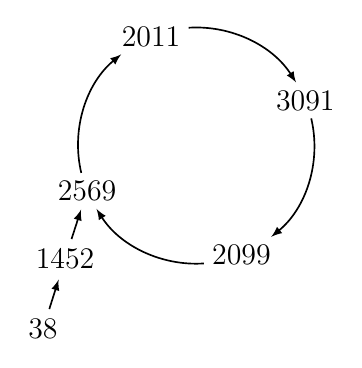}
    \label{fig:random-3551}
  \end{minipage}%
  \begin{minipage}[b]{.30\linewidth}
    \centering \includegraphics[width=\textwidth]{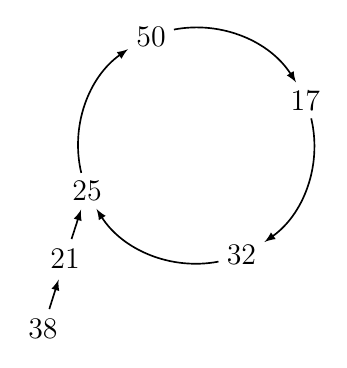}
    \label{fig:random-3551-53}
  \end{minipage}
  \begin{minipage}[b]{.30\linewidth}
    \centering \includegraphics[width=\textwidth]{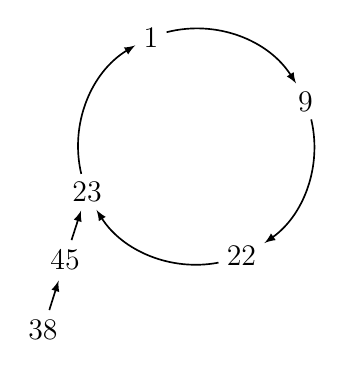}
    \label{fig:random-3551-67}
  \end{minipage}
  \caption{The sequences $n_k^2 + 8 \bmod{3551}$, $a_k^2 + 8
    \bmod{53}$, $b_k^2 + 8 \bmod{67}$ with $n_0 = a_0 = b_0 = 38$, an
    example of terrible luck for Pollard's Rho due to the cycles
    modulo $3551$, $53$ and $67$ of equal length.}
  \label{fig:random-3551-53-67}
\end{figure}

To illustrate the importance of Theorem~\ref{thm:poly-importance}, let
us look at an example.  Using Figure~\ref{fig:3127-53} as a guide, let
$(N_k)$ be the sequence generated by the rule $n_{k+1} = n_k^2 + 8
\bmod{3127}$ with initial value $n_0 = 2$.  If we knew that indices $i
= 8$ and $j = 12$ of the sequence $(A_k)$ provided a collision
relative to $(A_k)$, then we would compute $\gcd(n_8 - n_{12}, 3127) =
\gcd(615 - 456, 3127)$ and find $53$ as a nontrivial factor of $3127$.
But notice the pair $(n_8, n_{12})$ is not the only collision in the
cycle of $(N_k)$.  Indeed, the pairs $(9, 13)$, $(10, 14)$, $(11,
15)$, $(12, 16)$, $(13, 17)$, $(14, 18)$ are also collisions.  In
fact, since $\lA = 4$ is the length of the cycle of $(A_k)$, we get a
collision $(n_i, n_j)$ as long as $j - i$ is a multiple of $\lA$,
which greatly increases the probability we will find a collision in
$(N_k)$ compared to the case in which $(A_k)$ has the same cycle
length as that of $(N_k)$.

In Section~\ref{sec:main}, we present a new result
(Theorem~\ref{thm:main}) that characterizes nontrivial collisions in
terms of the lengths of the cycles of the sequences $(A_k)$ and
$(B_k)$, where $(A_k)$ is the sequence obtained by reducing modulo $A$
each element $n_k \in (N_k)$ and similarly for $(B_k)$.  The result
states that there is a nontrivial collision if and only if $\lA \ne
\lB$, where $\lA$ is the length of the cycle contained in the sequence
$(A_k)$ and similarly for $\lB$.  In particular, if $m$ is a multiple
of $\lA$ but not of $\lB$, then $(i, i + m)$ is a nontrivial
collision whenever $n_i$ is an element of the cycle contained in
$(N_k)$.  Finding a nontrivial collision by effectively getting a hold
of $m$ produces the nontrivial factor $A$ of $N$.  Thus, we may
equivalently understand the Pollard's Rho Problem as the task of
finding a suitable $m$ that is a multiple of the length of the cycle
in $(A_k)$ but not a multiple of the length of the cycle in $(B_k)$.

It is not obvious how to efficiently solve the Pollard's Rho Problem
in the classical model of computation.  Floyd's algorithm for
cycle-detection~~\cite[chapter 3, exercise~6b, page~7]{knuthv2}
provides us with a set of pairs that are collisions, not all of which
are nontrivial.  Thus, by using Floyd's algorithm, Pollard's Rho is
able to make educated guesses at pairs that might be nontrivial
collisions, optimizing a search that would otherwise be a brute-force
approach.

Typical sequences chosen for Pollard's Rho are generated by polynomial
functions of the form $n_{k+1} = (n_k^e + c) \bmod{N}$. Very little is
known about these polynomials, but it is clear that $c = -2$ should
not be used if $e = 2$. If $c = -2$ and $e = 2$, then $n_{k+1} = 2$
whenever $n_k = 2$, closing a cycle of length $1$.  If $(N_k)$ has a
cycle of length $1$, then $(A_k)$ has a cycle of length $1$ and so
does $(B_k)$, rendering trivial all collisions.
Figure~\ref{fig:random-3551-53-67} illustrates the case when all
cycles have the same length.

Many implementations choose the exponent $e = 2$.  The argument is
that polynomials of greater degree are more expensive to compute and
not much more is known about them than it is about those of second
degree~\cite[section~19.4, page~548]{gathen}.

\section{Floyd's algorithm as a strategy for the Pollard's Rho Problem}\label{sec:floyd}

Let us now discuss the generation of pairs that are collision
candidates.  The objective of Pollard's Rho is to find a nontrivial
collision, so any refinement we can make in the set of all possible
pairs is useful.  The candidates for collision are formed by pairing
elements of a sequence that cycles.  If nontrivial collisions are
nonexistent in the sequence, as illustrated by
Figure~\ref{fig:random-3551-53-67}, we must have a way to give up on
the search, lest we cycle on forever.  How can we avoid cycling on
forever?  A trivial strategy is obtained by storing in memory each
element seen and stopping when the next in the sequence has been seen
before.  In more precision: create a list $L$ and store $x_0$ in $L$,
where $x_0$ is the first element of the sequence.  Now, for $i = 1$,
compute $y = f(x_i)$ and verify whether $y \in L$.  If it is, we found
$f(x_i) = f(x_j)$ where $i \neq j$, hence a collision is $(x_i, x_j)$.
Otherwise, store $y \in L$, set $i = i + 1$ and repeat.  The
verification process of whether $y \in L$ can be done efficiently by
sorting the values $f(x)$ in a list $K$ and applying a binary search
to check whether $y$ was previously seen.  Binary search guarantees no
more than $\Theta(\lg|K|)$ comparisons would be made~\cite[section
  4.2.2, algorithm~4.3, page~125]{stinson2018}, where $|K|$ represents the
cardinality of $K$.  However, in this strategy, the space required for
the list $K$ grows linearly in $N$, an exponential amount of memory.
Floyd's algorithm~\cite[chapter 3, exercise~6b, page~7]{knuthv2}, on
the other hand, requires essentially just two elements of the sequence
to be stored in memory.  Let us see how this is possible.

\begin{algorithm}
\caption{Pollard's Rho using a polynomial $f(x)$ with initial value
  $x_0$.  The strategy for finding collisions is the one provided by
  Floyd's algorithm for cycle-detection.  The variable $a$ represents
  the position of Achilles while $t$ represents the tortoise's.  The
  procedure gives up on the search as soon as it finds a trivial
  collision.  Assume $N$ is not a prime power.}
\label{alg:rho}
\begin{algorithmic}[0]
\Procedure{$\rhoOP$}{$N, f, x_0$}:
  \State $a \gets t \gets x_0$
  \Loop
    \State $t \gets f(t) \bmod{N}$
    \State $a \gets f(f(a)) \bmod{N}$
    \State $d \gets \gcd(t - a, N)$
    \If{$d = N$}
      \State{\Return none}
    \EndIf
    \If{$1 < d < N$}
      \State{\Return $d$}
    \EndIf
  \EndLoop
\EndProcedure
\end{algorithmic}
\end{algorithm}

Picture the sequence of numbers containing a cycle as a race track.
Let us put two old friends, Achilles and the tortoise, to compete in
this race.  While the tortoise is able to take a step at every unit of
time, Achilles is able to take two steps.  By ``step'', we mean a jump
from one number in the sequence to the next.  After the starting gun
has fired, will Achilles ever be behind the tortoise?  He eventually
will because the track is infinite and contains a cycle.

Floyd's strategy proves Achilles catches the tortoise by eventually
landing on the same element of the sequence as the tortoise.
Moreover, they meet at an index that is a multiple of the length of
the cycle.  Let us see why.

Let $\lambda$ be the length of the cycle and $\mu$ the length of the
tail.  If both Achilles and the tortoise are in the cycle, then they
must have passed by at least $\mu$ elements of the sequence.  They can
never meet outside the cycle because, throughout the tail, Achilles is
always ahead of the tortoise.  Let $k$ be the distance between the
beginning of the cycle and the index at which Achilles meets the
tortoise.  Let $\LA$ be the number of laps Achilles has completed
around the cycle when he meets the tortoise.  Similarly, let $L_t$
describe the number of laps the tortoise has given around the cycle up
until it was caught by Achilles.  What is the distance traveled by
each runner?  Achilles has traveled $\mu + \lambda \LA + k$, while the
tortoise has traveled $\mu + \lambda L_t + k$.  Since Achilles travels
with double the speed of the tortoise, we may deduce
  $\mu + \lambda \LA + k = 2(\mu + \lambda L_t + k)$,
implying
\begin{align}
  \mu + k = \lambda (\LA - 2L_t) = \lambda M, \label{eq:where-they-meet}
\end{align}
where $M = \LA - 2L_t$.  Now, notice $\mu + k$ describes the index of
the sequence where Achilles meets the tortoise.  Therefore,
Equation~(\ref{eq:where-they-meet}) tells us they meet at an index of
the sequence that is a multiple of~$\lambda$.

It is not possible for Achilles to jump the tortoise and continue the
chase.  At each iteration, one step in the distance between them is
reduced.  In particular, if Achilles is one step behind the tortoise,
they both meet in the next iteration.  Thus, since the cycle has
length $\lambda$, Achilles must meet the tortoise in at most $\lambda
- 1$ steps.

\begin{theorem}[name=Robert W. Floyd]\label{thm:floyd}
Given a function $f\colon S \to S$, where $S$ is a nonempty finite set, let
$x_0, x_1, \dots$ be an infinite sequence generated by the rule
$x_{i+1} = f(x_i)$, for $i\ge 0$, where $x_0 \in S$ is given. Since
this sequence has a cycle, let $\mu$ be the length of the tail of the
sequence and $\lambda$ the length of the cycle. Then,
\[i\ge \mu \text{ and } i = n\lambda \text{\qquad if and only if\qquad} x_i = x_{2i},\]
for any natural numbers $n \ge 1$ and $i \ge 1$.
\end{theorem}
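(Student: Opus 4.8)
The plan is to prove the two directions of the biconditional separately, using the structural description of the sequence already set up in the discussion preceding Equation~(\ref{eq:where-they-meet}). Write the sequence as a tail of length $\mu$ followed by a cycle of length $\lambda$, so that for indices $i \ge \mu$ the value $x_i$ depends only on the residue $(i - \mu) \bmod \lambda$; equivalently, for $i, j \ge \mu$ we have $x_i = x_j$ if and only if $\lambda \mid (i - j)$. This is the one fact about cyclic sequences generated by iterating $f$ that I would establish (or simply quote as standard) at the outset, since everything else reduces to arithmetic on it.

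For the forward direction, assume $i \ge \mu$ and $i = n\lambda$ for some $n \ge 1$. Then $2i = 2n\lambda \ge i \ge \mu$, so both $x_i$ and $x_{2i}$ lie in the cycle, and $2i - i = i = n\lambda$ is a multiple of $\lambda$; hence $x_i = x_{2i}$. For the converse, assume $x_i = x_{2i}$ with $i \ge 1$. First I would argue $i \ge \mu$: if $i < \mu$ then $x_i$ is a tail element, and since Achilles ($x_{2i}$) is strictly ahead of the tortoise ($x_i$) throughout the tail — one extra step is gained per iteration, matching the argument already given in the text — we cannot have $x_i = x_{2i}$; this needs the small observation that if $x_i = x_{2i}$ with $i < \mu \le 2i$ we would get a tail element equal to a cycle element, forcing the "tail" to actually be part of the cycle, contradicting minimality of $\mu$. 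Once $i \ge \mu$, both indices are in the cycle, so $x_i = x_{2i}$ gives $\lambda \mid (2i - i) = i$, i.e. $i = n\lambda$ for some integer $n$, and $n \ge 1$ since $i \ge 1$.

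The main obstacle is the careful handling of the boundary case $i < \mu \le 2i$ in the converse: one must rule out that the tortoise, still on the tail, coincides with Achilles, who has already entered the cycle. I would resolve this by invoking that the tail consists precisely of the elements not belonging to the cycle (by definition of $\mu$ as the tail length), so a tail element can never equal a cycle element; combined with $2i \ge \mu$ (which holds automatically once $i \ge \lceil \mu/2 \rceil$, and the remaining tiny range $i < \mu/2$ is handled by Achilles being strictly ahead on the tail as already described), this closes the gap. The rest is bookkeeping with divisibility and should be presented in a line or two each.
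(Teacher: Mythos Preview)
Your proposal is correct and follows essentially the same route as the paper: the forward direction uses cycle periodicity, and the converse first rules out $i<\mu$ via uniqueness of tail elements and then extracts $\lambda\mid i$ from the cycle structure. You are in fact a bit more careful than the paper about the boundary sub-case $i<\mu\le 2i$, which the paper handles only implicitly through the blanket claim that on the tail $x_i=x_j$ forces $i=j$.
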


\begin{proof}
Suppose $i = n\lambda$, where $n \geq 1$, and $i \geq \mu$, that is,
suppose $i$ is an index in the cycle, so that $x_i = x_{i + t \lambda}$
for each natural $t \geq 0$.  In particular, if $t = n$, we have 
  \[x_{2i} = x_{i + i} = x_{i + n \lambda} = x_{i + t\lambda} = x_i,\]
as desired.

Suppose $x_i = x_{2i}$.  By way of contradiction, suppose $i < \mu$,
that is, let us see what happens if $i$ is an index on the tail, a
chunk of the sequence where $x_i = x_j$ if and only if $i = j$.  By
hypothesis, $i \ge 1$.  So, on the tail it cannot happen that $i = 2i$,
since that would imply $i = 0$.  Therefore, $x_i \neq x_{2i}$,
violating the initial assumption.  Thus, $i \geq \mu$, in which case
\begin{equation*}
  i = \mu + [(i - \mu) \bmod{\lambda}] \text{\qquad and\qquad} 2i = \mu + [(2i - \mu) \bmod{\lambda}],
\end{equation*}
where the expression $(i - \mu) \bmod{\lambda}$ describes the index
$i$ relative to the beginning of the cycle, not to the beginning of
the sequence.

Since $x_i = x_{2i}$, then
\begin{equation}\label{eq:*}
  i = \mu + \bigl[(i - \mu) \bmod {\lambda}\bigr] = \mu + \bigl[(2i - \mu)\bmod{\lambda}\bigr] = 2i.
\end{equation}
Subtracting $\mu + \bigl[(i - \mu) \bmod{\lambda}\bigr]$ from both
sides of Equation~(\ref{eq:*}), we get 
\begin{align*}
  0 &= \mu + \bigl[(2i - \mu) \bmod{ \lambda}\bigr] - \mu + \bigl[(i - \mu) \bmod{ \lambda}\bigr]\\
  &= (2i - \mu) - (i - \mu) \bmod{ \lambda}\\
  &= i \bmod{ \lambda},
\end{align*}
implying $i = n\lambda$ for all $n \geq 0$, that is, $i$ is a multiple
of $\lambda$, as desired.
\end{proof}

Applying Floyd's theorem to the problem of finding a collision in the
cycle, we can be sure some collision will occur at the pair $(i, 2i)$,
making certain the procedure terminates.  Refinements of Floyd's
method have been published~\cite{brent1980}\cite[chapter 3,
  exercise~7, page~8]{knuthv2}.  The Pollard's Rho algorithm with
Floyd's method is displayed in Algorithm~\ref{alg:rho}.

Floyd's algorithm has the merit that it takes constant space, since it
essentially needs to hold only two numbers in memory, but the worst
case time-complexity of the strategy is $\Theta(\mu + \lambda)$, where
$\lambda$ is the length of the cycle and $\mu$ is the length of the
tail.


We now proceed to the new result that allows us to characterize
nontrivial collisions.  The theorem leads us to a quantum version of
Pollard's Rho of which \sa\ is particular case.

\section{A characterization of nontrivial collisions}\label{sec:main}

The uniqueness claim in the fundamental theorem of
arithmetic~\cite[section~II, article~16,
  page~6]{gauss}~\cite[section~1.3, page~3, sections~2.10, 2.11,
  page~21]{hardy1975intro} guarantees that if two natural numbers have
identical prime factorization, then they are the same number.  The
contrapositive of this fact is that if two numbers are not the same,
there must be at least one prime power which appears in one
factorization but not on the other.  That is, two different numbers
can be distinguished by such prime power.  One way to pinpoint this
distinguishing prime power is to develop the following device.
Let \[N = \prod_s s^{e(s, N)},\] where $s$ is a prime number and $e(s,
N)$ is the exponent of $s$ in the prime factorization of $N$.  If $s$
does not divide $N$, we set $e(s, N) = 0$.  This way all natural
numbers are expressed in terms of all prime numbers.  Such device
allows us to distinguish two natural numbers $N \neq M$ by writing
  \[N = \prod_s s^{e(s, N)} \qquad\text{and}\qquad  M = \prod_s s^{e(s, M)}\]
and letting $t$ be a {\em distinguishing prime} relative to $N$ and
$M$ if $e(t,N) \ne e(t,M)$, where $e(t,N)$ is the largest exponent of
$t$ such that $t^{e(t,N)}$ divides $N$ and $e(t,M)$ the largest
exponent of $t$ such that $t^{e(t,M)}$ divides $M$.  

We formalize this definition and illustrate it with an example.

\begin{definition}\label{def:distinguishing}
Given two natural numbers $N \neq M$, we say any factor $t^{e(t, N)}$
of $N$ is a {\em distinguishing prime power relative to $M$} if
$e(t,N) \neq e(t, M)$ where $e(t,N)$ is the largest exponent of $t$
such that $t^{e(t, N)}$ divides $N$ and $e(t, M)$ is the largest
exponent of $t$ such that $t^{e(t, M)}$ divides $M$.  Similarly, we
say $t$ is a {\em distinguishing prime relative to $N, M$}.
\end{definition}

\begin{Example}
Let $N = 2 \cdot 3^2 \cdot 5^2$, $M = 2 \cdot 3^0 \cdot 7$.  Then
$t^{e(t, N)} = 3^2$ is a factor of $N$ where $2$ is the largest
exponent of $3$ such that $3^2$ divides $N$ and $e(3,M) = 0$ because
$0$ is the largest exponent of $3$ such that $3^0$ divides $M$.  We
can distinguish $N$ from $M$ by the fact that $3^2$ appears in the
prime factorization of $N$ but not in the prime factorization of $M$.
\end{Example}

In the proof of Theorem~\ref{thm:main} we need the following lemma.

\begin{notation}
If $f$ is a function and $(N_k)$ is a sequence of natural numbers, let
$f^k$ stand for the $k$-th recursive application of $f$ to an
arbitrary element of $(N_k)$. For example, if $n_5 \in (N_k)$, then
$f^0(n_5) = n_5$, $f^1(n_5) = f(n_5)= n_6$ and $f^3(n_5) = (f \circ f
\circ f)(n_5) = n_8$.
\end{notation}

\begin{lemma}\label{lem:lcm}
Let $N = AB$, where $A, B$ are coprime nontrivial factors of $N$.
Given a polynomial function $f\colon S \to S$, where $S$ is a finite nonempty
subset of the natural numbers, let $\lambda$ be the length of the
cycle of the sequence $(N_k)$ generated by the rule $n_{i+1} = f(n_i)
\bmod{N}$, for all $i \ge 1$, with a given first element $n_0 \in
S$. Then
 \[\lambda = \lcm(\lA, \lB),\]
where $\lA$ and $\lB$ are the lengths of the cycles of the
corresponding sequences obtained by reducing each element $n_k \in
(N_k)$ modulo $A$ and $B$, respectively.
\end{lemma}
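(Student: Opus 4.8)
The plan is to reduce everything to the Chinese Remainder Theorem together with the elementary theory of eventually periodic sequences. Since $\gcd(A,B)=1$, the map $n\mapsto(n\bmod A,\ n\bmod B)$ is a bijection of $\{0,\dots,N-1\}$ onto $\{0,\dots,A-1\}\times\{0,\dots,B-1\}$. First I would record the fact, essentially contained in the proof of Theorem~\ref{thm:poly-importance}, that reduction commutes with the dynamics: because $f$ has integer coefficients and $A\mid N$, one has $n_{k+1}\bmod A=(f(n_k)\bmod N)\bmod A=f(n_k\bmod A)\bmod A$, so by induction the sequence $(A_k)$ obtained by reducing $(N_k)$ modulo $A$ is exactly the orbit of $n_0\bmod A$ under $x\mapsto f(x)\bmod A$; likewise for $(B_k)$. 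Hence, under the CRT identification, the orbit $(N_k)$ corresponds to the orbit of the pair $(n_0\bmod A,\ n_0\bmod B)$ under the product map, and its cycle is the componentwise pairing of the cycles of $(A_k)$ and $(B_k)$.

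Next I would isolate the one piece of general nonsense needed: if a sequence $x_0,x_1,\dots$ on a finite set is eventually periodic with minimal eventual period $p$, then every integer $m\ge 1$ satisfying $x_{k+m}=x_k$ for all sufficiently large $k$ is a multiple of $p$. This is the usual division-with-remainder argument: writing $m=qp+s$ with $0\le s<p$ and using $x_{k+qp}=x_k$ for large $k$ gives $x_{k+s}=x_k$ for large $k$, so $s=0$ by minimality of $p$.

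With these two ingredients the statement splits into two divisibilities. For $\lcm(\lA,\lB)\mid\lambda$: for $k$ past the tail of $(N_k)$ we have $n_{k+\lambda}=n_k$, hence $a_{k+\lambda}=a_k$ and $b_{k+\lambda}=b_k$ for all large $k$, so by the lemma $\lA\mid\lambda$ and $\lB\mid\lambda$, whence $\lcm(\lA,\lB)\mid\lambda$. For the reverse, set $L=\lcm(\lA,\lB)$; for $k$ larger than the tails of both $(A_k)$ and $(B_k)$ we get $a_{k+L}=a_k$ and $b_{k+L}=b_k$, and then CRT forces $n_{k+L}=n_k$ for all large $k$, so again by the lemma $\lambda\mid L$. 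Combining the two divisibilities gives $\lambda=\lcm(\lA,\lB)$.

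I do not expect a genuine obstacle here; the only thing to watch is the bookkeeping of the tails, i.e.\ asserting only that $\lambda$ and $L$ are \emph{eventual} periods (valid once the index exceeds the relevant pre-periods) rather than periods from index $0$, and applying the minimality of $\lambda$, $\lA$, $\lB$ as cycle lengths exactly at the point where the division-with-remainder lemma is used. The CRT lift in the second divisibility is the conceptual heart of the argument; everything else is plain reduction modulo $A$ and modulo $B$.
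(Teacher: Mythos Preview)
Your proposal is correct and follows essentially the same approach as the paper: both show that $\lambda$ is a common multiple of $\lA,\lB$ by reducing the relation $f^{\lambda}(x)\equiv x\pmod N$ modulo $A$ and modulo $B$, and then use coprimality of $A,B$ (CRT) to get the reverse divisibility. The paper's version is marginally slicker in that it fixes from the outset an element $x$ already in the cycle of $(N_k)$---hence automatically in the cycles of $(A_k)$ and $(B_k)$---which sidesteps your tail bookkeeping, and it phrases the second half as a contradiction rather than via your division-with-remainder lemma.
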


\begin{proof}
Fix an element $x$ in the cycle of $(N_k)$.  By definition,
$\lambda$ is the least positive integer such that $f^{\lambda}(x)
\equiv x \bmod{N}$, which implies $f^{\lambda}(x) - x \equiv 0
\bmod{N}$.  In other words, $f^{\lambda}(x) - x$ is a multiple of
$N$.  Since $N = AB$, any multiple of $N$ is a multiple of both $A$
and $B$, that is, $f^{\lambda}(x) - x \equiv 0 \bmod{A, B}$,
implying $f^{\lambda}(x) \equiv x \bmod{A, B}$.  This shows
$\lambda$ is a common multiple of both $\lA$ and $\lB$.
We are left with showing $\lambda$ is the least such multiple.

By way of contradiction, suppose that $\lambda$ is not the least
common multiple of $\lA$ and $\lB$.  Then there is a
positive integer $\mu < \lambda$ such that $\mu$ is a multiple of both
$\lA$ and $\lB$.  This means $f^\mu(x) \equiv x \bmod{A,
  B}$, which implies $f^\mu(x) - x \equiv 0 \bmod{A, B}$.  But this
result implies $f^\mu(x) - x$ is a multiple of $N$ too, so $f^\mu(x) - x
\equiv 0 \bmod{N}$ implying $f^\mu(x) \equiv x \bmod{N}$.  In other
words, there is a positive integer $\mu < \lambda$ such that $f^\mu(x)
\equiv x \bmod{N}$, violating the hypothesis that $\lambda$ is the
least positive integer with the property.  Therefore, no such $\mu$
exists and $\lambda = \lcm(\lA, \lB)$, as desired.
\end{proof}

\begin{theorem}[name=A characterization of nontrivial collisions]\label{thm:main}
Let $N = AB$, where $A, B$ are coprime nontrivial factors of $N$.  Let
$f\colon S \to S$ be a polynomial function, where $S$ is a finite
nonempty subset of the natural numbers.  Let $(N_k)$ be the infinite
sequence generated by the rule $n_{i+1} = f(n_i) \bmod{N}$, for all $i
\ge 1$, with a given first element $n_0 \in S$.  Similarly, let
$(A_k)$ and $(B_k)$ be the infinite sequences generated by reducing
modulo $A, B$ each element $n_k \in (N_k)$.  Then $\lA \neq \lB$ if
and only if there exists a natural number $m < \lambda$ such that $m$
is a multiple of $\lA$ but $m$ is not a multiple of $\lB$, or {\em
  vice-versa}, where $\lambda$ is the length of cycle in the sequence
$(N_k)$ and $\lA, \lB$ are the lengths of the cycles of their
corresponding sequences. Moreover,
  \[1 < \gcd(f^m(x) - x, N) < N\]
for some element $x$ inside the cycle of $(N_k)$.
\end{theorem}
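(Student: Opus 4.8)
The plan is to establish the biconditional by proving each implication separately, and then to verify the ``moreover'' clause by exhibiting the promised element $x$; the two ingredients I expect to use are Lemma~\ref{lem:lcm} (so that $\lambda = \lcm(\lA,\lB)$) and Theorem~\ref{thm:poly-importance} (so that reductions modulo $A$ and modulo $B$ interact well with iteration of $f$).

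The implication ``$m$ exists'' $\Rightarrow$ ``$\lA \neq \lB$'' is immediate: if $\lA = \lB$, then a natural number is a multiple of $\lA$ precisely when it is a multiple of $\lB$, so no $m$ of the stated kind can exist. For the converse I would assume $\lA \neq \lB$ and build $m$ by a short case split. If $\lA$ is not a multiple of $\lB$, set $m = \lA$; then $m$ is a multiple of $\lA$ but not of $\lB$, and $m = \lA \le \lcm(\lA,\lB) = \lambda$, with equality ruled out because $\lA = \lcm(\lA,\lB)$ would force $\lB \mid \lA$. If instead $\lA$ is a multiple of $\lB$, then $\lA \neq \lB$ forces $0 < \lB < \lA$, so $\lB$ is a multiple of $\lB$ but not of $\lA$, and $m = \lB < \lA \le \lambda$. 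Either way we get a natural number $m < \lambda$ of the required type (possibly with the roles of $\lA$ and $\lB$ interchanged).

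For the last claim, fix any element $x$ in the cycle of $(N_k)$, say $x = n_i$ with $i$ beyond the tail. Reducing the periodicity $n_{i} \equiv n_{i+\lambda}$ modulo $A$ (and modulo $B$) shows, via Theorem~\ref{thm:poly-importance}, that $x \bmod A$ and $x \bmod B$ lie in the cycles of $(A_k)$ and $(B_k)$, on which $f$ acts as cyclic permutations of orders $\lA$ and $\lB$. Taking $m$ as above with $\lA \mid m$ and $\lB \nmid m$, we get $f^m(x) \equiv x \bmod A$ but $f^m(x) \not\equiv x \bmod B$, so $A \mid f^m(x) - x$ while $B \nmid f^m(x) - x$. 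Hence $A$ divides $\gcd(f^m(x) - x, N)$, giving $\gcd(f^m(x) - x, N) \ge A > 1$; and $\gcd(f^m(x) - x, N) = N$ is impossible, since it would give $B \mid f^m(x) - x$. Therefore $1 < \gcd(f^m(x) - x, N) < N$, and the case with $A$ and $B$ swapped is identical.

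I expect the main obstacle to be the bookkeeping in this last paragraph: one must argue carefully that reducing a cycle element of $(N_k)$ modulo $A$ lands in the \emph{cycle} (not the tail) of $(A_k)$, and that on that cycle $f^m$ fixes an element exactly when $\lA \mid m$ --- this is precisely where Theorem~\ref{thm:poly-importance} and the minimality built into the definition of cycle length are needed. A secondary subtlety is the strict inequality $m < \lambda$ in the converse, which relies on the $\lcm$ identity from Lemma~\ref{lem:lcm} together with the case split above.
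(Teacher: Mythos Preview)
Your argument is correct, but your construction of $m$ differs from the paper's. You take $m \in \{\lA,\lB\}$ via a case split on whether $\lB \mid \lA$; the paper instead invokes the prime factorizations of $\lA$ and $\lB$, picks a \emph{distinguishing prime} $t$ (one whose exponent differs in the two factorizations, say $e(t,\lA) < e(t,\lB)$), and sets $m = \lambda/t$. Your route is shorter and entirely elementary, and it avoids the detour through Definition~\ref{def:distinguishing}. The paper's choice, however, is not incidental: the quantum algorithm built on this theorem (Algorithm~\ref{alg:quantum-rho}) knows $\lambda$ but not $\lA$ or $\lB$, and it searches for a nontrivial collision precisely by trying $m = \lambda/d$ over small prime divisors $d$ of $\lambda$. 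So the paper's proof simultaneously establishes the theorem and certifies that the algorithm's candidate values of $m$ can succeed, whereas your values $m = \lA$ or $m = \lB$ are not computable without already knowing a factor. For the ``moreover'' clause the two proofs are essentially the same; you use $\lB \nmid m$ to get $f^m(x) \not\equiv x \bmod B$ directly, while the paper uses $m < \lambda$ to get $f^m(x) \not\equiv x \bmod N$, and either inequality yields $\gcd(f^m(x)-x,N) < N$. Your care in checking that $x \bmod A$ lands in the cycle of $(A_k)$ (so that $f$ really acts with order $\lA$ there) is appropriate and, if anything, more explicit than the paper's treatment of that point.
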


\begin{proof}
The converse implication is easily proved by noticing that if there is
a natural number $m < \lambda$ such that $m$ is a multiple of $\lA$
but $m$ is not a multiple of $\lB$, then $\lA \ne \lB$.  (If $m$ is a
multiple of $\lB$ but not a multiple of $\lA$, then again $\lA \ne
\lB$.)

Let us now prove the forward implication. We must show that (1) a
certain natural number $m$ exists and (2) that $m$ satisfies $1 <
\gcd(f^m(x) - x, N) < N$ for some fixed element $x$ in the cycle of
$(N_k)$ generated by $f$.  To prove the existence of $m$ we may take
either one of two paths, namely (a) show that there is a natural
number $m < \lambda$ such that $m$ is a multiple of $\lA$ but $m$ is
not a multiple of $\lB$ or (b) show that $m$ is not a multiple of
$\lA$ but $m$ is a multiple of $\lB$.  We will take path (a).

Let $\lA \neq \lB$. By Lemma~\ref{lem:lcm}, $\lambda = \lcm(\lA,
\lB)$.  Now write
  \[\lA = \prod_s s^{e(s, \lA)} \qquad\text{and}\qquad  \lB = \prod_s s^{e(s, \lB)},\]
where $s$ is a prime number and $e(s, \lA)$ represents the exponent of
$s$ in the prime factorization of $\lA$. Let $t$ be a distinguishing
prime relative to $\lA, \lB$ in the sense of
Definition~\ref{def:distinguishing}.  Without loss of generality,
assume $e(t, \lA) < e(t, \lB)$. Let $m = \lambda/t$.  By construction,
$m < \lambda$.  Given that $t$ is a distinguishing prime relative to
$\lA, \lB$ and $e(t, \lA) < e(t, \lB)$, then $e(t, m) = e(t, \lB) - 1$
because $m = \lambda/t$ and $\lambda = \lcm(\lA, \lB)$, implying $m$
is not a multiple of $\lB$.  In the prime factorization of $\lA$,
however, the largest possible value for $e(t, \lA)$ is $e(t, \lB) -
1$, thus $m$ is a multiple of $\lA$, as desired.  We are left with
showing $1 < \gcd(f^m(x) - x, N) < N$.

Fix an element $x$ in the cycle of $(N_k)$ generated by $f$.  By
definition, $\lambda$ is the least positive integer such that
$f^{\lambda}(x) \equiv x \bmod{N}$.  Since $m < \lambda$, then $f^m(x)
\not\equiv x \bmod{N}$, otherwise $m$ would be the length of the cycle
of $(N_k)$.  Thus, $f^m(x) - x \not\equiv 0 \bmod{N}$, meaning $f^m(x)
- x$ is not a multiple of $N$.  Since we have already established that
$m$ is a multiple of $\lA$, so $f^m(x) \equiv x \bmod{A}$ implying
$f^m(x) - x$ is a multiple of $A$.  Therefore, $f^m(x) - x$ is a
multiple of $A$ but not a multiple of $N$.  Hence,
  \[1 < \gcd(f^m(x) - x, N) < N,\]
as desired.
\end{proof}

As the proof of Theorem~\ref{thm:main} suggests, we can split $N$ by
finding a number $m$, if it exists, having the property that it is a
multiple of $\lA$ but not of $\lB$, or {\em vice-versa}.  Since the
\qpfa\ is able to compute $\lambda$ in polynomial-time, we are left
with finding $m$.  The classical version of Pollard's Rho searches for
a nontrivial collision among pairs of numbers in a cycle generated by
an iterated polynomial function.  Since Theorem~\ref{thm:main} gives
us a nontrivial collision if we find $m$, we can replace the classical
version's searching procedure with the \qpfa\ (which will give us
$\lambda$) followed by a search for $m$.  So, the algorithm we present
next is a quantum version of Pollard's Rho.

\section{A quantum version of Pollard's Rho}\label{sec:quantum-rho}

Since a quantum model of computation provides the polynomial-time
\qpfa~\cite[section~5.4.1, page~236]{qc}, we can design a quantum
version of Pollard's Rho.  If we just translate the classical version
in a trivial way, it will not be obvious how to take advantage of
quantum parallelism because, for example, the typical polynomial
$f(x)=x^2 + 1 \bmod{N}$ used in the classical version has no
easy-to-find closed-form formula that we can use.  We would end up
with an exponential quantum version of the strategy.  Not every
function used in the classical version will produce an efficient
quantum version of the method.

The straightforward way to take advantage of quantum parallelism is to
find a closed-form formula for the iterated function that can be
calculated in polynomial-time. (Informally, an arithmetical expression
is said to be in closed-form if it can be written in terms of a finite
number of familiar operations.  In particular, ellipses are not
allowed if they express a variable number of operations in the
expression~\cite[section~5.3, page~282]{hein1996}.)

Theorem~\ref{thm:closed} on page~\pageref{thm:closed} proves that the family of iterated functions 
\begin{equation}\label{eq:fam}
  f(x) = ax^2 + bx + \frac{b^2 - 2b}{4a}
\end{equation}
has closed-form formula 
  \[f^n(x) = \frac{2\alpha^{2^n} - b}{2a},\]
where $\alpha = (2ax + b)/2$.  Having the closed-form expression for
$f^n(x)$, we can take advantage of quantum parallelism.  However,
since we desire a polynomial-time algorithm, we must find a way to
keep the exponent small in $\alpha^{2^n}$, which is achievable if we
reduce $2^n$ modulo $\ord(\alpha, N)$ or modulo a multiple of
$\ord(\alpha, N)$.  We get
  \[f^n(x) = (2\alpha^{\gamma} - b)(2a)^{-1} \bmod{N},\]
where $\gamma = 2^n \bmod{r}$ and $r = \ord(\alpha, N)$ with
$\gcd(\alpha, N)=1$.  One last requirement is choosing $a$ and $b$
such that $f^n(x)$ is a polynomial of integer coefficients.  For
example, if we set $a = 1$, $b = 2$ and let $x_0$ be an initial value
for the sequence, we get the polynomial $n_{i+1} = f(x_i) = x_i^2 +
2x_i \bmod{N}$ whose closed-form formula is
  \[n_{i+1} = g(i) = ((x_0 + 1)^{\gamma} - 1) \bmod{N},\]
where $\gamma = {2^i \bmod r}$, $r = \ord(x_0 + 1, N)$ and $x_0$ is
some initial value such that $\gcd(x_0 + 1, N) = 1$.

\begin{Remark} 
The family of functions defined by Equation~\ref{eq:fam} is not the
only one that can be used.  For example, $f(x) = a x \bmod{N}$, where
$1 < a \bmod{N}$ is fixed, is a family of functions useful to the
method too.  In fact, this family reduces the quantum version of
Pollard's Rho to \sa.  We investigate this reduction in
Section~\ref{sec:restrict}.
\end{Remark}

\begin{Remark}
The dependency on $\ord(\alpha, N)$ makes the quadratic family of
Equation~\ref{eq:fam} an alternative factoring strategy when both
\sa~(Section~\ref{sec:shor}) and its extended version
(Section~\ref{sec:extended}) fail.  We can use $r$ computed by the
failed attempts to satisfy the closed-form formula $g$.
\end{Remark}

Although these polynomials $g$ of integer coefficients defined by
Equation~\ref{eq:fam} contain a cycle, they are not, in general,
periodic functions.  A periodic $g$ would require $g(x, i + \lambda) =
g(x, i)$ for every $x$ in the domain of $g$, for some $\lambda > 0$,
which is not satisfied by all $x$.  To get a periodic function for
taking advantage of the \qpfa, which is the method that will provide
us with $\lambda$, we can set its initial element $x_0$ to some
element in the cycle contained in $g$.  Since the length of the cycle
in $g$ is bounded by $N$, then $g(N)$ must be an element in the cycle,
so we get the desired restriction on $g$.  Therefore, another key step
in the algorithm is to compute the $N$-th element of the sequence
generated by $g$.

The function which we should use in the quantum version of Pollard's
Rho is $g(i) = f^i(x_0)$, where $i$ is the $i$-th element in the
sequence generated by $g$.  As an example, we use 
  \[g(i)= {2\alpha^{2^i} - b}({2a})^{-1}\bmod{N},\]
where $\alpha = (2ax_0 + b)2^{-1}\bmod{N}$, which corresponds to the
iterated function
  \[f(x) = ax^2 + bx + ({b^2 - 2b})({4a})^{-1}\bmod{N}.\] 
Observe that the strategy does not need $f$.  The procedure {\sc
  quantum-rho}, expressed in Algorithm~\ref{alg:quantum-rho}, uses
only $g$.  The important requirement is for $g$ to correspond to an
iterated function.  For example, instead of the $g$ we use, we could
take $g(i) = x_0 a^i \bmod{N}$ because this family corresponds to the
family $f(x) = ax \bmod{N}$ of iterated functions.  The importance of
iterated functions is the same as that of polynomials of integer
coefficients in the classical version of Pollard's Rho
(Theorem~\ref{thm:poly-importance},
Section~\ref{sec:pollard-strategy}).

\begin{algorithm}[htb]
\def\spacing{0.2cm}
\caption{A quantum version of Pollard's Rho using an integer periodic
  sequence modulo $N$ generated by a closed-form formula $g$
  corresponding to an iterated function.  Assume $N$ is not a prime
  power.}
\label{alg:quantum-rho}
\begin{algorithmic}[0]
\Procedure{quantum-rho}{$N$}
  \State $r_\g \gets \textsc{quantum-period-finding}(g)$
  \For{$d \text{ in } \text{divisors}(r_\g)$}
    \State $m \gets \gcd(g(N + r_\g / d) - g(N), N)$
    \If{$1 < m < N$}
      \State\Return $m$ \Comment Nontrivial collision found.
    \EndIf
  \EndFor
  \State\Return none
\EndProcedure\vskip\spacing


\end{algorithmic}
\end{algorithm}



We now describe the steps of \textsc{quantum-rho}, the procedure
expressed in Algorithm~\ref{alg:quantum-rho}.  In
Section~\ref{sec:circuit}, we give a description of a quantum circuit
that could be used to execute Algorithm~\ref{alg:quantum-rho} on a
quantum computer.

The procedure consumes $N$, the composite we wish to factor.  At a
first stage, Algorithm~\ref{alg:quantum-rho} must use a circuit like
the one described in Section~\ref{sec:circuit} to compute the length
$r_\g$ of the cycle contained in the sequence generated by the
function $g$.  Then the procedure checks to see if any pair $(N, N +
r_\g/d)$ is a nontrivial collision relative to $(N_k)$.  The
characterization of nontrivial collisions established by
Theorem~\ref{thm:main} asserts that if $r_\g/d$ is a multiple of $\lA$
but not a multiple of $\lB$, then $(N, N + r_\g/d)$ is a nontrivial
collision, where $\lA$ and $\lB$ are the lengths of the cycles of the
sequences generated by $g$ reduced modulo $A$ and $B$, respectively,
where $N = AB$ and $A$ is coprime to $B$.

An example should clarify the procedure.

\begin{Example}
Let $p = 7907$, $q = 7919$ so that $N = pq = 62615533$.  Let $a = 1$,
$b = 2$ so that $f$ is the polynomial of integer coefficients $x_{i+1}
= f(x_i) = x_i^2 + 2x_i \bmod{N}$ and choose $x_0 = 3$ as its initial
value.  The closed-form formula for $f$ is $g(i) = ((x_0 + 1)^{\gamma}
- 1) \bmod{N}$, where $\gamma = {2^i \bmod r}$ and $r = \ord(x_0 + 1,
N)$.  Assuming we have $r = \ord(4,N) = 15649927$, we compute $x_\N =
g(N) = 10689696$, the $N$-th element of the sequence generated by $g$,
which is an element in the cycle.  Restricting $g$ by letting its
initial value be $x_\N$, we get a periodic function whose period $r_\g
= 608652$ is computed by the \qpfa.  The procedure then looks for a
nontrivial collision by trying pairs $(N, N + r_\g/d)$ for prime
divisors $d$ of $r_\g$.  Since $608652$ is even, $d = 2$ is the first
prime divisor of $r_\g$ revealed.  In this case, the algorithm finds a
nontrivial collision using $d=2$ because the prime factorizations of
\begin{align*}
     r_\g/2 &= 2 \times 3^2  \times 11 \times 29 \times 53\\
  \lambda_P &= 2   \times 3   \times 11 \times 29\\
  \lambda_Q &= 2^2 \times 3^2  \times 53
\end{align*}
reveals that $r_\g/2$ is a multiple of $\lambda_P$ but not of
$\lambda_Q$, hence $(N, N + r_\g/2)$ is a nontrivial collision, which
we confirm by computing $\gcd(16896691 - 10689696, 62615533) = 7907$.
\end{Example}

The procedure ``divisors'' used in Algorithm~\ref{alg:quantum-rho}
searches for any small prime divisors $d$ of $r_\g$ if it can find.
We can guarantee a polynomial-time bound for the procedure by
restricting the search up to the $n$-th smallest prime, where $n$ is
the number of bits in $N$.

Asymptotically, the complexity of Algorithm~\ref{alg:quantum-rho} is
the same as that of Shor's algorithm, $O(\lg^3 N)$, given that the
procedure ``divisors'' is interrupted after $n$ attempts, where $n$ is
the number of bits in $N$.  (Finer estimates for \sa\ have been
given~\cite{beauregard2002,beckman1996}.)

We close this section by proving the equivalence between the family of
quadratic polynomials and their closed-form formulas used by
Algorithm~\ref{alg:quantum-rho}.

\begin{theorem}\label{thm:closed}
The closed-form formula for the iterated family 
  \[f(x) = ax^2 + bx + \frac{b^2 - 2b}{4a}\]
of such quadratic polynomials is 
  \[f^i(x_0) = \frac{2\alpha^{2^i} - b}{2a},\]
where $\alpha = (2ax_0 + b)/2$ and $f^i$ stands for the $i$-th
iteration of $f$ having $x = x_0$ as an initial value.
\end{theorem}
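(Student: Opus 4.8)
The plan is to prove the closed-form formula by induction on $i$, after first verifying the base case $i = 0$ directly. The key algebraic identity driving the whole argument is that the substitution $\alpha = (2ax + b)/2$ turns the quadratic iteration $f(x) = ax^2 + bx + (b^2 - 2b)/(4a)$ into a pure squaring map in the $\alpha$-coordinate; that is, conjugating $f$ by the affine change of variable $x \mapsto \alpha = (2ax+b)/2$ yields $\alpha \mapsto \alpha^2$. Once this conjugation identity is established, the closed form $f^i(x_0) = (2\alpha^{2^i} - b)/(2a)$ is immediate, since $i$-fold iteration of the squaring map sends $\alpha$ to $\alpha^{2^i}$, and inverting the change of variable (solving $\alpha = (2ax+b)/2$ for $x$, giving $x = (2\alpha - b)/(2a)$) recovers the stated formula.

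Concretely, the first step is to compute $f(x)$ in terms of $\alpha$. Set $\alpha = (2ax+b)/2$, so $x = (2\alpha - b)/(2a)$, and substitute into $f(x) = ax^2 + bx + (b^2-2b)/(4a)$. Expanding $a\left(\frac{2\alpha - b}{2a}\right)^2 + b\left(\frac{2\alpha - b}{2a}\right) + \frac{b^2 - 2b}{4a}$ and collecting terms, I expect the $\alpha$-linear and constant pieces to cancel so that the result is exactly $(2\alpha^2 - b)/(2a)$ — in other words, $f(x) = (2\alpha^2 - b)/(2a)$ where the new $\alpha$-coordinate of the output is $\alpha^2$. This is the only genuinely computational step, and it is routine expansion of a quadratic.

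Next I would phrase the induction cleanly. The inductive hypothesis is $f^i(x_0) = (2\alpha^{2^i} - b)/(2a)$ with $\alpha = (2ax_0 + b)/2$. For the inductive step, apply $f$ to $f^i(x_0)$; by the conjugation identity of the previous paragraph, if $y = (2\beta - b)/(2a)$ for some $\beta$, then $f(y) = (2\beta^2 - b)/(2a)$. Taking $\beta = \alpha^{2^i}$ gives $f^{i+1}(x_0) = f(f^i(x_0)) = (2(\alpha^{2^i})^2 - b)/(2a) = (2\alpha^{2^{i+1}} - b)/(2a)$, completing the induction. The base case $i = 0$ reads $f^0(x_0) = x_0 = (2\alpha^{2^0} - b)/(2a) = (2\alpha - b)/(2a)$, which holds by the definition $\alpha = (2ax_0 + b)/2$.

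I do not anticipate a serious obstacle; the theorem is essentially the observation that these particular quadratics are affinely conjugate to $z \mapsto z^2$, and the proof is a short verification plus induction. The one point requiring mild care is bookkeeping with the factor $1/(2a)$ and the assumption $a \neq 0$ (implicit since $a$ appears in denominators), and — if one wants the formula to hold modulo $N$ as used in Algorithm~\ref{alg:quantum-rho} — ensuring $2a$ is invertible modulo $N$, though for the purely formal identity stated in Theorem~\ref{thm:closed} this is not needed.
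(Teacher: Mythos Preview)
Your proposal is correct and follows essentially the same route as the paper: induction on $i$, with the base case verified from the definition of $\alpha$ and the inductive step carried out by substituting $(2\alpha^{2^k}-b)/(2a)$ into $f$ and expanding. Your phrasing of the key computation as an affine conjugation of $f$ to the squaring map $\alpha\mapsto\alpha^2$ is a slightly cleaner packaging of the same algebra the paper performs inline, but the content is identical.
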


\begin{proof}
We prove by induction on $i$.  The first element of the sequence
generated by $f$ is $x_0$ by definition, which we verify by computing 
\begin{equation*}
  f^0(x_0) = x_0
           = \frac{2ax_0 + b - b}{2a}
           = 2 \left(\frac{[2ax_0 + b]/2}{2a}\right) - \frac{b}{2a}
           = \frac{2\alpha - b}{2a}
           = \frac{2\alpha^{2^0} - b}{2a}.
\end{equation*}
Now, suppose
\[f^k(x_0) = \frac{2\alpha^{2^k} - b}{2a},\]  
 for some $k \ge 0$ where $\alpha = (2ax_0 + b)/2$.

Since \[f^{k+1}(x_0) = f(f^k(x_0)),\] we deduce 
\begin{align*}
  f^{k+1}(x_0) &= f\Biggl(\frac{2\alpha^{2^k} - b}{2a}\Biggr)\\
  &= a\left(\frac{2\alpha^{2^k} - b}{2a}\right)^2 + b\left(\frac{2\alpha^{2^k} - b}{2a}\right) + \frac{b^2 - 2b}{4a}\\
  &= \frac{4\alpha^{2^{k+1}} -4\alpha^{2^k}b  + b^2}{4a} + \frac{2\alpha^{2^k}b - b^2}{2a} + \frac{b^2/2 - b}{2a}\\
  &= \frac{2\alpha^{2^{k+1}} -2\alpha^{2^k}b  + b^2/2 + 2\alpha^{2^k}b - b^2 + b^2/2 - b}{2a}\\
  &= \frac{2\alpha^{2^{k+1}} - b}{2a},
\end{align*}
as desired.
\end{proof}


\section{The strategy in Shor's algorithm}\label{sec:shor}

Shor's algorithm~\cite{shor94,shor97,shor99} brought quantum computing
to the spotlight in 1994 with its exponential speed up of a solution
for the problem of finding the order of an element $x$ in a finite
group.  The order $r = \ord(x, N)$ of an element $x \in \Z$ is the
smallest positive number $r$ such that $x^r = 1 \bmod{N}$.  There is a
polynomial-time reduction of the problem of factoring to the problem
of finding the order of an element~\cite{miller1976}. By computing
$\ord(x, N)$ in polynomial-time in a quantum model of computation, the
rest of the work of factoring $N$ can be carried out efficiently in a
classical model of computation, providing us with an efficient
solution to the problem of factoring.

\begin{algorithm}[htb]
\def\spacing{0.2cm}
\caption{Shor's algorithm using $x \bmod{N}$ with $x$ coprime to $N$
  and $N$ not a prime power.}
\label{alg:shor}
\begin{algorithmic}[0]
\Procedure{shor}{$x, N$}
  \State $r \gets \textsc{quantum-order-finding}(x, N)$
  \If{$r$ is even}
    \State $p \gets \gcd(x^{r/2} - 1, N)$
    \If{$1 < p < N$}
      \State\Return $p$ \Comment Shor's condition is satisfied.
    \EndIf
  \EndIf
  \State\Return none
\EndProcedure\vskip\spacing
\end{algorithmic}
\end{algorithm}

If $r = \ord(x, N)$ is even, for the chosen $x \in \Z$, \sa\ finds a
factor by computing $\gcd(x^{r/2} -1, N)$.  The essence of the
strategy comes from the fact that
\begin{equation}\label{eq:shor-again}
 (x^{r/2} -1) (x^{r/2} + 1) = x^{r} - 1 = 0 \bmod{N}.
\end{equation}
It is easy to see that if $x^{r/2} \equiv -1 \bmod{N}$ then the
equation is trivially true, leading the computation of the $\gcd$ to
reveal the undesirable trivial factor $N$.  In other words, the
algorithm fails.  \sa\ needs not only an even order, but also $x^{r/2}
\not\equiv -1 \bmod{N}$.  

If $r$ is odd, an extension~\cite{extension,johnston2017} of
\sa\ is useful for further attempts at splitting $N$.

In the expression of \sa\ in Algorithm~\ref{alg:shor}, we assume $N$
is not a prime power~\cite[section~6, page~130]{shor94}.  Verifying a
number is not a prime power can be done
efficiently~\cite{aks2004}\cite[chapter~3, note~3.6, page~89]{hac} in
a classical model of computation.  Througout this document, whenever
\sa\ is mentioned, we assume such verification is applied.

\section{An extension of Shor's algorithm to odd orders}\label{sec:extended}

Special ways of using odd orders have been know for some
time~\cite{cao2005,lawson2015,xu2018} and a general extension of
\sa\ to odd orders has also been presented~\cite{johnston2017}.  More
recently, we presented a different
perspective~\cite[section~3]{extension} of the same
result~\cite{johnston2017} to extend \sa\ to any odd order,
establishing the equivalence of both perspectives.

\begin{algorithm}[htb]
\def\spacing{0.2cm}
\caption{An extension of Shor's algorithm in which a certain fixed
  number of small divisors of the order of $x$ in $\Z$ is considered.
  The procedure assumes $N$ is not a prime power.}
\label{alg:extended}
\begin{algorithmic}[0]
\Procedure{extended-shor}{$x, N$}
  \State $r \gets \textsc{quantum-order-finding}(x, N)$
  \For{$d \text{ in } \text{divisors}(r)$}
    \State $p \gets \gcd(x^{r/d} - 1, N)$
    \If{$1 < p < N$}
      \State\Return $p$
    \EndIf
  \EndFor
  \State\Return none 
\EndProcedure\vskip\spacing
\end{algorithmic}
\end{algorithm}

The reason Shor's procedure needs $r$ to be even is due to
Equation~(\ref{eq:shor-again}), but a generalization of this equation
naturally leads us to the extended version
(Algorithm~\ref{alg:extended}) showing how any divisor $d$ of $r$ can
be used. For instance, if $3$ divides $r$, then
 $(x^{r/3} -1) (1 + x^{r/3} + x^{2r/3}) = x^{r} - 1 \equiv 0 \bmod{N}$.
In general,
\begin{equation}
\label{eq:extended-shor}
 (x^{r/d} -1) \left(\sum_{i=0}^{d - 1} x^{ir/d}\right) = x^{r} - 1 \equiv 0 \bmod{N},
\end{equation}
whenever $d$ divides $r$.

However, the extended version must hope that $x^{r/d} \not\equiv -1
\bmod{N}$, when the extended version would also fail, as it similarly
happens in the original version of \sa.
Equation~(\ref{eq:extended-shor}) gives us little understanding of
when such cases occur, but Theorem~\ref{thm:main} provides deeper
insight.  Sufficient conditions~\cite[sections~2--3,
  pages~2--3]{johnston2017} for the success of the extended version
and an equivalent result~\cite[section~3]{extension} have been
presented.

\section{Pollard's Rho is a generalization of \sa}\label{sec:restrict}




As we promised in Section~\ref{sec:quantum-rho}, let us revist the
family of functions
\begin{equation}\label{eq:fam:shor}
  f(x) = a x \bmod{N},
\end{equation}
where $1 < a \bmod{N}$ is a fixed natural number.  Let $x_0 = 1$ be
the first element of the sequence generated by this iterated function.
If we choose $a = 2$, we get the sequence $1, 2, 4, 8, ...$.  This
iterated function has closed-form formula
\begin{equation}\label{eq:g:shor}
  g(i) = a^i \bmod{N}.
\end{equation}
It follows that $g$ is purely periodic and the length of its period is
$\ord(a, N)$.  Since $g$ is purely periodic and $1$ is always an
element of the sequence, instead of using $g(N)$ as an element in the
cycle as we did in Algorithm~\ref{alg:quantum-rho}, we use $g(0) = 1$.
Using $g$ (from Equation~\ref{eq:g:shor}) for the quantum version of
Pollard's Rho, we get Algorithm~\ref{alg:quantum-rho-shor}.  The
difference between Algorithm~\ref{alg:quantum-rho-shor} and
Algorithm~\ref{alg:quantum-rho} is the choice of the function $g$ and
the choice of an element that we can be sure it belongs to the cycle.

\begin{algorithm}[htb]
\def\spacing{0.2cm}
\caption{A quantum version of Pollard's Rho using an integer periodic
  sequence modulo $N$ generated by a closed-form formula $g$ from the
  family defined by Equation~\ref{eq:fam:shor} on
  page~\pageref{eq:fam:shor} with first element $x_0 = 1$. Assume $N$
  is not a prime power.}
\label{alg:quantum-rho-shor}
\begin{algorithmic}[0]
\Procedure{quantum-rho'}{$a, N$}
  \State $r_\g \gets \textsc{quantum-period-finding}(g)$
  \For{$d \text{ in } \text{divisors}(r_\g)$}
    \State $p \gets \gcd(g(r_\g / d) - g(0), N)$ \Comment Notice $g(0)=1$.
    \If{$1 < p < N$}
      \State\Return $p$ \Comment Nontrivial collision found.
    \EndIf
  \EndFor
  \State\Return none
\EndProcedure\vskip\spacing

\Function{$g$}{$i$}
  \State\Return $a^i \bmod{N}$
\EndFunction
\end{algorithmic}
\end{algorithm}

Let us see an example of the steps of Algorithm~\ref{alg:quantum-rho-shor}.

\begin{Example}
Let $a = 3$ so that the function defined by Equation~\ref{eq:fam:shor}
has closed-form formula $g(i) = 3^i \bmod{N}$.  Let $p = 19$ and $q =
11$ so that $N = pq = 209$.  The period of $g$ is $r_\g = 90$ and so
$d = 2$ divides $r_\g$.  Since $\ord(3,11) = 5 \neq 18 = \ord(3,19)$
and $r_\g/d = 45$ is a multiple of $\ord(3,11)$ but not a multiple of
$\ord(3,19)$, Theorem~\ref{thm:main} guarantees that $(0, 45)$ is a
nontrivial collision.  We check the result computing $\gcd(g(90/2) -
g(0)) = \gcd(3^{90/2} - 1, 209) = \gcd(55, 209) = 11$, as desired.
\end{Example}

Since $r_\g$ happens to be even, we can see that the example follows
the exact steps of the original algorithm published by Peter Shor in
1994.

We end this section with one final example.

\begin{Example}
Let $p = 7907$, $q = 7919$ so that $N = pq = 62615533$.  If we pick $a
= 3$, we get $r_\g = \ord(3,N) = 15649927$, an odd integer, a case in
which Shor's original algorithm would not succeed. Since $N$ has 26
bits, the procedure checks if any of the smallest 26 primes divides
$r_\g$.  The smallest eleven primes do not, but the twelfth prime is
37 and it divides $r_\g$, so Algorithm~\ref{alg:quantum-rho-shor}
finds a nontrivial factor by computing $\gcd(x^{r_\g/37} - 1, N) =
\gcd(48604330 - 1, 62615533) = 7907$.  We can see why it succeeds by
looking at the prime factorizations of
\begin{align*}
          r_\g/37  &= 59 \times 67 \times 107\\
  r_p &= 59 \times 67\\
  r_q &= 37 \times 107,
\end{align*}
where $r_p = \ord(3, 7907)$ and $r_q = \ord(3, 7919)$.  We see that
$r_\g/37$ is a multiple of $r_p$ but not a multiple of $r_q$, so
Theorem~\ref{thm:main} guarantees that $1 < \gcd(x^{r/37} - 1, N) <
N$.  From the point of view of the extended version of Shor's
algorithm, it succeeds because $x^{r/37} \not\equiv -1 \bmod{N}$, but
in the light of Theorem~\ref{thm:main} we get the deeper insight that
the strategy succeeds because 37 happens to be a distinguishing prime
relative to $r_p, r_q$ in the sense of
Definition~\ref{def:distinguishing}.
\end{Example}

\section{A description of a quantum circuit for Pollard's Rho}\label{sec:circuit}
\renewcommand{\N}{143}
\newcommand{\n}{8}  
\renewcommand{\r}{15}
\newcommand{\iz}{3} 
\newcommand{\xiz}{126} 
\renewcommand{\ll}{15}
\renewcommand{\L}{32768}
\newcommand{\LL}{32767} 

We now describe a circuit for the quantum version of Pollard's Rho
using elementary quantum gates.  For greater clarity, we implement the
circuit relative to the function $f(x) = x^2 + 2x \bmod{N}$ and take
$N = 11 \times 13$ as a concrete example.  Despite this particular
choice of $N$ in our description, the circuit is general for
the function $f(x) = x^2 + 2x \bmod{N}$ and describing different
functions would follow similar steps.  With this choice of $f(x)$, 
we have chosen $a = 1$ and $b = 2$ in Equation~\ref{eq:fam}, so $\alpha \equiv 2
\bmod{N}$ and
   \[g(i)=(x_0+1)^{2^i\bmod{r}} - 1 \bmod{N},\] 
where $r = \ord(x_0 + 1,N)$ and $x_0$ is some initial value such that
$\gcd(x_0 + 1, N) = 1$.  Let us let $x_0 = 2$ so that $\ord(3, \N) =
\r$.

The need for calculating $r$ implies that, before using this circuit,
we should see if Shor's original algorithm (or its extended version)
is able to split $N$.  If neither succeeds, then instead of running
either one of them again, the quantum version of Pollard's Rho using
the family of Equation~\ref{eq:fam} is an alternative, since the
number $r$ it needs is already computed by the failed attempts of
Shor's original algorithm and its extended version.

We illustrate first the operator for modular exponentiation
(Figure~\ref{fig:opExMod}).  The operator $U$ for calculating 
$g(i)$ is defined as 
$U \qr{i}\qr{y}\rightarrow\qr{i}\qr{y \oplus g(i)}$.
In our example,
  \[U\qr{i}\qr{y}\rightarrow\qr{i}\qr{y \oplus (3^{2^i \bmod{\r}} - 1 \bmod{\N})}.\]
The circuit for $U$ is illustrated by Figure~\ref{fig:opU}.  The
operators used in $U$ are the modular exponentiation operator,
described by Figure~\ref{fig:opExMod} and the modular SUB operator,
both of which are well-known operators~\cite{meter,vedral}.

\begin{figure}[htb]
\centering
\includegraphics[scale=0.6]{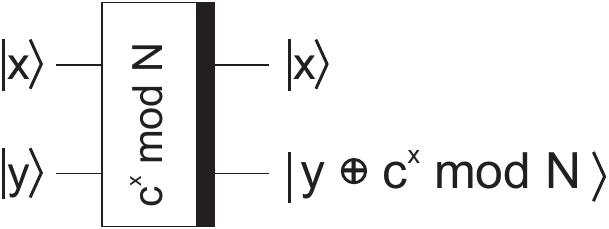}
\caption{Operator for modular exponentiation}
\label{fig:opExMod}
\end{figure}

\begin{figure}[htb]
\centering
\includegraphics[scale=0.7]{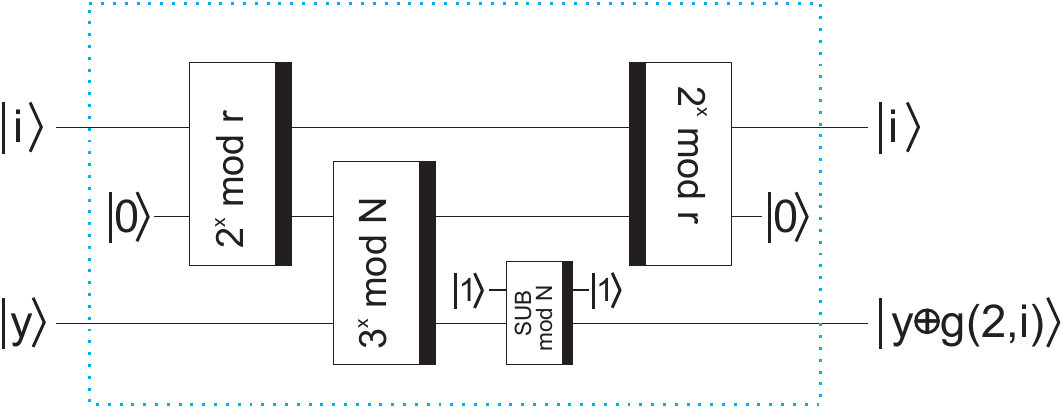}
\caption{The circuit for $U$ using $x_0 = 2$.}
\label{fig:opU}
\end{figure}

Let us now describe the steps in the \qpfa, illustrated by
Figure~\ref{fig:PeriodFinding}.
\begin{figure}[htb]
\centering
\includegraphics[scale=0.7]{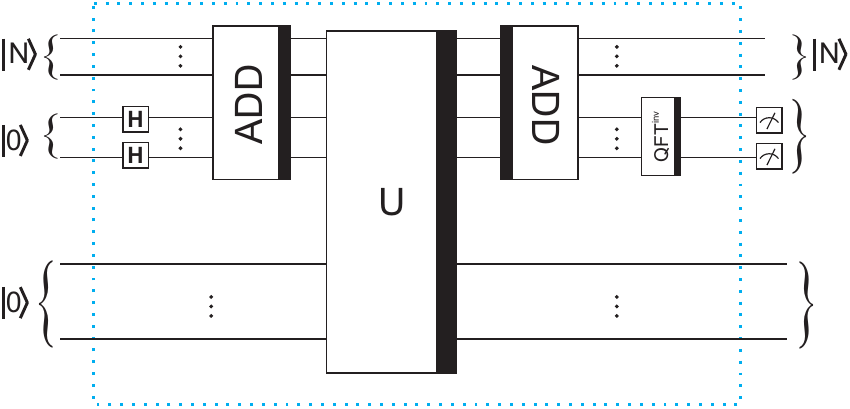}
\caption{Perioding-finding using the $U$ operator of
  Figure~\ref{fig:opU} to factor $N$.}
\label{fig:PeriodFinding}
\end{figure}
The initial state of the system is
\[ \qr{\Psi_0}=\qr{N}_n\qr{0}_{\ell}\qr{0}_n,\]
where $n=\lfloor \log_2 N\rfloor +1$ is the number of bits needed to
represent $N$ and $2^\ell$ is the number of elements evaluated by the
operator QFT.  In general, the size $\ell$ of the second register
satisfies $N^2 \le 2^\ell < N^2 + 1$.  See
Shor~\cite{shor1997polynomial} for more details.  The ancilla bits are
not shown in Figure~\ref{fig:PeriodFinding}.

After the Hadamard gates are applied, we get
\[ \qr{\Psi_1}=\qr{\N}_n H^{\otimes\ell}\qr{0}_\ell\qr{0}_n=\frac{1}{2^{\ell/2}}\sum_{i=0}^{2^\ell-1}\qr{\N}_n\qr{i}_\ell\qr{0}_n.\]
Our strategy to restrict $g$ to the cycle, before $U$ is applied, is
to give $g$ an initial value that is an element in the cycle, so we
use the ADDER operator to shift the register $N$ units ahead, yielding
\[\qr{\Psi_2}=\frac{1}{2^{\ell/2}}\sum_{i=0}^{2^\ell-1}\qr{\N}_n\qr{\N+i}_{\ell}\qr{0}_n.\]
The ADDER operator is defined as
$ADD\colon\qr{A}\qr{B}\rightarrow\qr{A}\qr{A+B}$. The size of the
second register needs to be greater than or equal to the first
register in this operator. See Vedral, Barenco and Eckert for more
information about the implementation of the ADDER~\cite{vedral}.  An
extra q-bit $\qr{0}$ is needed to avoid a possible overflow. In our
case, this extra q-bit is part of the second register of the ADDER,
but the Hadamard gate is not applied to this q-bit. For simplicity,
this bit is omitted in Figure~\ref{fig:PeriodFinding} and in the
description of the states of the system.

The next step is the application of $U$, after which the state of the
system is
\[\qr{\Psi_3}=U\qr{\Psi_2}=\frac{1}{2^{\ell/2}}\qr{\N}_n\sum_{i=0}^{2^\ell-1}\qr{\N+i}_{\ell}\qr{3^{2^{\N + i} \bmod{\r}}-1 \bmod{\N}}_n.\]

For $N=\N$, we have $2^{\N}\equiv 2^\iz \bmod{\r}$ and $n=\n$. Taking
$\ell=\lfloor\log_2(N^2)\rfloor+1=\ll$, we get
\[\qr{\Psi_3}=\frac{1}{\sqrt{\L}}\qr{\N}\sum_{i=0}^{\LL}\qr{\N+i}\qr{3^{2^{\iz+i} \bmod{\r}} - 1 \bmod{\N}}.\]
%
We can rewrite $\qr{\Psi_3}$ as
\[
\begin{array}{rlll}
\qr{\Psi_3}=\frac{1}{\sqrt{\L}}\qr{\N}\bigl(
&(\qr{143}+\qr{147}+\qr{151}+\qr{155}+\cdots+\qr{32907})&\qr{125}&+\\
&(\qr{144}+\qr{148}+\qr{152}+\qr{156}+\cdots+\qr{32908})&\qr{2}&+\\
&(\qr{145}+\qr{149}+\qr{153}+\qr{157}+\cdots+\qr{32909})&\qr{8}&+\\
&(\qr{146}+\qr{150}+\qr{154}+\qr{158}+\cdots+\qr{32910})&\qr{80}&\bigr).
\end{array}
\]

The next step is the application of the reverse of the ADDER operator,
after which we get
\[
\begin{array}{rlll}
\qr{\Psi_4}=\frac{1}{\sqrt{\L}}\qr{\N}\bigl(
&(\qr{0}+\qr{4}+\qr{8}+\qr{12}+\cdots+\qr{32764})&\qr{125}&+\\
&(\qr{1}+\qr{5}+\qr{9}+\qr{13}+\cdots+\qr{32765})&\qr{2}&+\\
&(\qr{2}+\qr{6}+\qr{10}+\qr{14}+\cdots+\qr{32766})&\qr{8}&+\\
&(\qr{3}+\qr{7}+\qr{11}+\qr{15}+\cdots+\qr{32767})&\qr{80}&\bigr).
\end{array}
\]

The final state before measurement is
$\qr{\Psi_5}=QFT^\dagger\qr{\Psi_4}$, that is,

\[\qr{\Psi_5}=\frac{1}{\L}\qr{\N}\sum_{i=0}^{\LL}\sum_{k=0}^{\LL}e^{2\pi \iota ik/\L }\qr{i}\qr{3^{2^{\iz+i} \bmod{\r}}-1\bmod{\N}},\]

where $\iota = \sqrt{-1}$.

We can use the principle of implicit measurement~\cite[box~5.4,
  page~235]{nielsen2004} to assume that the second register was
measured, giving us a random result from \{2, 8, 80, 125\}.  So, in
this example, there are four possible outcomes of the measurement in
the first register of the state $\qr{\Psi_5}$, all of which have the
same probability of being measured.  We might get either 0, 8192,
16384 or 24576, that is, $0/2^\ell$, $r_\g/2^\ell$, $2r_\g/2^\ell$ or
$3r_\g/2^\ell$, where $r_\g=4$ is the period of the cycle produced by
the function $g$. For more information on extracting $r_\g$ from the
measurement of $\qr{\Psi_5}$, please see
Shor~\cite{shor1997polynomial} and Nielsen, Chuang~\cite[section~5.3.1,
  page~226]{nielsen2004}.

Finally, we compute $m = \gcd(g(N + r_\g/2) - g(N), N) = \gcd(8 - 125,
143) = 13$, as desired.

\section{Conclusions}

John M.~Pollard presented in 1975 an exponential algorithm for
factoring integers that essentially searches the cycle of a sequence
of natural numbers looking for a certain pair of numbers from which we
can extract a nontrivial factor of the composite we wish to
factor, assuming such pair exists.  Until now there was no
characterization of the pairs that yield a nontrivial factor.  A
characterization is now given by Theorem~\ref{thm:main} in
Section~\ref{sec:main}.

Peter W.~Shor presented in 1994 a polynomial-time algorithm for
factoring integers on a quantum computer that essentially computes the
order $r$ of a number $x$ in the multiplicative group $\Z$ and uses
$r$ to find a nontrivial divisor of the composite.  No publication had
so far reported that Shor's strategy is essentially a particular case
of Pollard's strategy.

In the light of Theorem~\ref{thm:main} in Section~\ref{sec:main}, we
wrote a quantum version of Pollard's Rho
(Algorithm~\ref{alg:quantum-rho}, Section~\ref{sec:quantum-rho}) and
Section~\ref{sec:restrict} exposes the fact that by choosing a certain
family of functions the steps of the algorithm are reduced to the
exact steps of \sa.

\printbibliography 
\end{document}